\documentclass[12pt, reqno]{amsart}
\usepackage{amsmath, amsthm, amscd, amsfonts, amssymb, graphicx, color, booktabs, multirow, rotating}
\textheight 22.5truecm \textwidth 14.5truecm
\setlength{\oddsidemargin}{0.35in}\setlength{\evensidemargin}{0.35in}

\setlength{\topmargin}{-.5cm}

\newtheorem{theorem}{Theorem}[section]
\newtheorem{lemma}[theorem]{Lemma}
\newtheorem{proposition}[theorem]{Proposition}

\theoremstyle{definition}
\newtheorem{definition}[theorem]{Definition}
\newtheorem{example}[theorem]{Example}

\theoremstyle{remark}

\numberwithin{equation}{section}

\usepackage{listings}

\begin{document}
\setcounter{page}{1}

\title{Goppa Codes: Key to High Efficiency and Reliability in Communications}
\author{Behrooz Mosallaei$^{1,\dag}$, Farzaneh Ghanbari$^{2}$, Sepideh Farivar$^{2}$ and Vahid Nourozi$^{1,\star}$}
\address{$^{1}$ The Klipsch School of Electrical and Computer Engineering, New Mexico State University,
Las Cruces, NM 88003 USA}
\email{behrooz@nmsu.edu$^{\dag}$, nourozi@nmsu.edu$^{\star}$}
\address{$^{3}$ Department of Pure Mathematics, Faculty of Mathematical Sciences, Tarbiat Modares University, P.O.Box:14115-134, Tehran, Iran
}
\email{f.ghanbari@modares.ac.ir}
\address{$^{3}$ Department of Computer Science, University of Nevada, Las Vegas, 4505 S. Maryland Parkway, Las Vegas, NV, 89154-4030, USA
}
\email{Farivar@unlv.nevada.edu}

\thanks{$^{\star}$Corresponding author}

\begin{abstract}
In this paper, we study some codes of algebraic geometry related to certain maximal curves. Quantum stabilizer codes obtained through the self-orthogonality of Hermitian codes of this error-correcting do not always have good parameters. However, appropriate parameters found that the Hermitian self-orthogonal code’s quantum stabilizer code has good parameters. Therefore, we investigated the quantum stabilizer code at a certain maximum curve and modified its parameters. Algebraic geometry codes show promise for enabling high data rate transmission over noisy power line communication channels.
\end{abstract}

\keywords{Goppa codes, Network Performance, Code-based Error Correction, Quantum stabilizer codes, Maximal curve, Communication Efficiency.}


\maketitle

\section{Introduction}
\label{sec:introduction}

Let $\mathcal{X}$ be an irreducible, projective, and nonsingular algebraic curve defined over the finite field $\mathbb{F}_{q^2}$, where $q = p^n$ is a prime power of the prime number $p$. Let $\mathcal{X}(\mathbb{F}_{q^2})$ be the set of rational points of the curve $\mathcal{X}$ over $\mathbb{F}_{q^2}$. In the analysis of curves over finite fields, the main problem is the size of $\mathcal{X}(\mathbb{F}_{q^2})$. The Hasse-Weil bound plays a key role here, which claims that
$$\mid \# \mathcal{X}(\mathbb{F}_{q^2}) - (q^2 +1) \mid \leq 2g q,$$
with the genus $g$ of curve $\mathcal{X}$, we can study the exciting result of rational points over curves; see \cite{66tores}, and \cite{33tores}.

The curve $\mathcal{X}$ is called maximal over $\mathbb{F}_{q^2}$ if the number of $\mathcal{X}(\mathbb{F}_{q^2})$ elements satisfies
$$\# \mathcal{X}(\mathbb{F}_{q^2})= q^2 + 1 + 2g q.$$
We only investigate maximal curves of the positive genus, and therefore $q$ will be a square.

In \cite{tor}, the authors showed that 

$$\qquad  \mbox{either}  \quad g \leq g_2 := \lfloor \frac{(q-1)^2}{4} \rfloor \quad  \mbox{or} \quad  g_1=\frac{q(q-1)}{2}. $$ 
{R\"{u}ck} and Stichtenoth \cite{stir} proved that up to  $\mathbb{F}_{q^2}$-isomorphism, there is only one maximal curve on $\mathbb{F}_{q^2}$ of genus  $\frac{q(q-1)}{2}$, namely the so-called Hermitian curve over $\mathbb{F}_{q^2}$, which the affine equation can define as

$$y^{q}+y=x^{q+1}.$$

If $q$ is an odd prime power, from \cite{33tores}, there is a unique maximal curve $\mathcal{X}$ over $\mathbb{F}_{q^2}$ of genus  $g=\frac{(q-1)^2}{4}$, defined by the affine equation
 \begin{equation}\label{xxx}
  y^{q}+y=x^{\frac{q+1}{2}}.
 \end{equation}

Algebraic-geometric ($AG$) codes are error-correcting codes built from algebraic curves introduced by Goppa, see \cite{77tores}.  Approximately, when the underlying curve holds many rational points, the $AG$ code has better parameters. In this case, the maximal curve plays a key role. That is, the curves have as many rational points as possible about their genus. This code consists of both divisors $D$ and $G$ of $\mathcal{X}$, one of which is the sum of $n$ individual rational points $\mathbb{F}_{q^2}$ of $\mathcal{X}$. This code $C$ found the minimum distance $d$ satisfied
$$d \geq n - \mbox{deg}(G).$$

The main instrument for building $AG$ codes is the Weierstrass semigroup $H(P)$ of $\mathcal{X}$ at $P$. $H(P)$, characterized as the arrangement of all numbers $k$ for which there is a rational function on $\mathcal{X}$ having a pole divisor $kP$. $H(P$) is a subset of $\mathbb{N} = \{ 0, 1, 2, \cdots \}$.

$AG$ codes in Hermitian curves have been developed in many papers.; see \cite{10maria,24maria,25maria,26maria,44maria,46maria,47maria}. The family of classic self-orthogonal Hermitian codes comes from the Goppa code (AG-code) \cite{44jin,55jin,66jin}. Also, Vahid introduced the Goppa code from Hyperelliptic Curve \cite{aut, shiraz}, from plane curves given by separated polynomials \cite{code, esfahan}, and he explained them in his Ph.D. dissertation in \cite{phd}.

In Section \ref{se33}, we introduce basic notions and preliminary results of $AG$ codes and
certain maximal curves. Sections \ref{sec44}, contain the quantum Goppa code on curve $\mathcal{X}$.

\section{Preliminary}\label{s2}
\subsection{Algebraic Geometry Codes}

In this paper, we suppose that the rational functions $\mathbb{F}_{q^2}$ (\text{resp.} the $\mathbb{F}_{q^2}$ divisors) be the field $\mathbb{F}_{q^2}(\mathcal{X})$ (\text{resp.} $\mathrm{Div}_{q^2}(\mathcal{X})$). If $f \in \mathbb{F}_{q^2}(\mathcal{X})\setminus \{0\}$, $\mathrm{div}(f)$ determines the associate divisor with $f$. For $A \in \mathrm{Div}_{q^2}(\mathcal{X})$, $\mathcal{L}(A)$ indicates the Riemann-Roch associated vector space over $\mathbb{F}_{q^2}$ with $A$, i.e.

\begin{equation*}
\mathcal{L}(A) = \{ f \in \mathbb{F}_{q^2}(\mathcal{X})\setminus \{0\}: A + \mbox{div}(f) \succeq 0\} \cup \{0\}.
\end{equation*}
We set $\ell(A) := \mbox{dim}_{\mathbb{F}_{q^2}}(\mathcal{L}(A))$.

Let $P_1, \cdots , P_n$ be different pairs of rational points $K$ over $\mathcal{X}$ and $D = P_1 +\cdots+ P_n$ of degree $1$. Pick  a divisor $G$ on $\mathcal{X}$ to such an extent that $\mbox{supp}(G) \cap \mbox{supp}(D) = \phi$.
\begin{definition}
The algebraic geometry code (or $AG$ code) $C_{\mathcal{L}}(D,G)$ associated
with the divisors $D$ and $G$ is defined as
\begin{equation*}
C_{\mathcal{L}}(D,G) := \{(x(P_1), \cdots, x(P_n)) \mid x \in \mathcal{L}(G)\} \subseteq \mathbb{F}^n_{q^n}
\end{equation*} 
\end{definition}
The minimum distance $d$ does $d \geq d^{\star} = n - \mbox{deg}(G)$, where $d^{\star}$ is so-called the $AG$ designed minimum distance of $C_{\mathcal{L}}(D,G)$, in this case, with the Riemann-Roch Theorem if $\mbox{deg}(G) > 2g - 2$ then $k =\mbox{deg}(G) - g + 1$; see [\cite{2121}, Theorem 2.65]. The dual-code $C^{\perp}(D,G)$ is a Goppa code by minimum distance $d^{\perp} \geq \mbox{deg}G - 2g + 2$ and dimension $k^{\perp} = n - k$. Suppose that $H(P)$ denotes the associated Weierstrass semigroup by $P$, i.e.
\begin{equation*}
  \begin{array}{ccccccc}
              H(P) &:=& \{n \in \mathbb{N}_0 \mid \exists f \in \mathbb{F}_{q^2}(\mathcal{X}), \mbox{div}_{\infty}(
f) = nP\}  \\
              &=& \{ \rho_0 = 0 < \rho_1 < \rho_2 < \cdots \}.\\
             \end{array}
\end{equation*}

Remember that for two vectors $a = (a_1, \cdots , a_n)$ and $b = (b_1, \cdots , b_n)$, the Hermitian inner product in $\mathbb{F}_{q^2}^n$ is defined by $<a, b>_T:=\sum_{i=1}^n a_ib_i^q$. The Hermitian dual code of $C$ for a linear code $C$ over $\mathbb{F}_{q^2}^n$ is determined by
\begin{equation*}
C^{\perp T} := \{ \upsilon \in \mathbb{F}_{q^2}^n: <\upsilon , c>_H=0 \hspace{0.3cm} \forall c \in C\}.
\end{equation*}
Therefore, if $C \subseteq C^{\perp T}$, then $C$ is a self-orthogonal Hermitian code.

\section{Goppa Code Over Curve $\mathcal{X}$}\label{se33}

Let $F_{q^2}$ be the finite field with $q^2$ elements, where $q$ is an odd prime power. Consider the curve $\mathcal{X}$ defined by equation \ref{xxx}  over $F_{q^2}$. The genus of $\mathcal{X}$ is $g = \frac{(q-1)^2}{4}$. Let $P_1, \dots, P_n$ be the $F_{q^2}$-rational points on $\mathcal{X}$, where $n = \#\mathcal{X}(F_{q^2}) \leq \frac{3}{2}(q^2+1)$. Let $D = P_1 + \dots + P_n$, and let $G$ be a divisor on $\mathcal{X}$ with support disjoint from $D$.

We define the Goppa code $C_L(D, G)$ as the image of the linear map $\alpha: L(G) \rightarrow F_{q^2}^n$, $f \mapsto (f(P_1), \dots, f(P_n))$. The dimension of $C_L(D, G)$ is given by $k = \ell(G) - \ell(G-D)$, and the minimum distance $d$ satisfies $d \geq n - \deg(G)$.

\begin{lemma}\label{lemma31}
A basis for the linear space $L(mP_\infty)$ is given by $\{x^i y^j \mid iq + j\left(\frac{q+1}{2}\right) \leq m, 0 \leq i, 0 \leq j \leq \frac{q-1}{2}\}$.
\end{lemma}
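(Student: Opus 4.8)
The plan is to follow the standard template for a curve with a single, totally ramified place at infinity, reducing everything to three ingredients: (i) the pole orders of $x$ and $y$ at $P_\infty$, (ii) linear independence via distinct pole orders, and (iii) a dimension count against the Riemann--Roch formula. First I would pin down the valuations at $P_\infty$. Viewing $\mathbb{F}_{q^2}(\mathcal{X}) = \mathbb{F}_{q^2}(x,y)$ as a degree-$q$ extension of $\mathbb{F}_{q^2}(x)$ through the additive (linearized) polynomial $y^q + y$, and noting that $\gcd\!\left(\frac{q+1}{2},p\right)=1$, the pole of $x$ is totally ramified in this Artin--Schreier-type extension; hence there is a unique place $P_\infty$ above it with $v_{P_\infty}(x) = -q$. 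Balancing the pole orders of the two sides of $y^q + y = x^{(q+1)/2}$ at $P_\infty$ (the term $y^q$ dominates $y$) then forces $v_{P_\infty}(y) = -\frac{q+1}{2}$.

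Next I would record that $x$ and $y$ are regular away from $P_\infty$, so each monomial $x^i y^j$ has its only pole at $P_\infty$, with $v_{P_\infty}(x^i y^j) = -\left(iq + j\frac{q+1}{2}\right)$. Consequently $x^i y^j \in \mathcal{L}(mP_\infty)$ precisely when $iq + j\frac{q+1}{2} \le m$, which is exactly the membership condition in the statement; this settles that every listed function lies in the space. For the linear independence I would invoke the elementary fact that functions with pairwise distinct pole orders at a common place are $\mathbb{F}_{q^2}$-linearly independent. Here the key arithmetic input is $\gcd\!\left(q,\frac{q+1}{2}\right)=1$: if $i_1 q + j_1\frac{q+1}{2} = i_2 q + j_2\frac{q+1}{2}$, then $q \mid (j_1 - j_2)$, and since the admissible $j$ lie in a window of width less than $q$ this forces $(i_1,j_1) = (i_2,j_2)$. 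Thus the listed monomials have distinct pole orders and are independent.

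The final, and hardest, step is to show that the listed set actually spans $\mathcal{L}(mP_\infty)$, i.e. that its cardinality equals $\ell(mP_\infty)$. For this I would identify the Weierstrass semigroup $H(P_\infty)$ explicitly: it contains $\langle q, \frac{q+1}{2}\rangle$, and since the number of gaps of this numerical semigroup is $\frac{(q-1)\left(\frac{q+1}{2}-1\right)}{2} = \frac{(q-1)^2}{4} = g$, it must coincide with $H(P_\infty)$. One then checks that the admissible exponent pairs form a system of representatives realizing each pole number of $H(P_\infty)$ that is $\le m$ exactly once; comparing this count with $\ell(mP_\infty)$ (via Riemann--Roch when $m > 2g-2$, and via the gap-counting description of $\ell$ in general) yields equality, and hence the spanning property and the basis claim.

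The main obstacle is precisely this last bookkeeping of the exponent ranges against the representative (Apéry-set) structure of $\langle q, \frac{q+1}{2}\rangle$: it is the coprimality of $q$ and $\frac{q+1}{2}$ together with the exact width of the exponent window that governs whether every achievable pole order $\le m$ is hit once and only once. I would therefore spend most of the effort verifying that the stated range of the exponents yields no collisions and no omissions among the attainable pole numbers, since everything else (valuations, membership, independence) is routine once the valuations are fixed.
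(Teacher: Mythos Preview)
Your proposal is correct and follows essentially the same approach as the paper: identify the pole orders $(x)_\infty = qP_\infty$ and $(y)_\infty = \frac{q+1}{2}P_\infty$, deduce membership of the monomials $x^i y^j$ in $\mathcal{L}(mP_\infty)$, and use the bound on $j$ to ensure linear independence. The paper's proof is in fact only a two-sentence sketch that asserts these facts without justification; your plan fills in precisely the details (total ramification at $P_\infty$, distinct pole orders via $\gcd(q,\frac{q+1}{2})=1$, and the semigroup/Riemann--Roch dimension count) that the paper leaves implicit.
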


\begin{proof}
The functions $\mathcal{X}$ and $y$ have pole divisors $(x)_\infty = qP_\infty$ and $(y)_\infty = \left(\frac{q+1}{2}\right)P_\infty$, respectively. The monomials $x^i y^j$ with $iq + j\left(\frac{q+1}{2}\right) \leq m$ form a basis for $L(mP_\infty)$, and the restriction on $j$ ensures linear independence over $F_{q^2}$.
\end{proof}

Let $G = mP_\infty$, where $P_\infty$ is the point at infinity on $\mathcal{X}$. The Goppa code $C_L(D, G)$ has length $n \leq \frac{3}{2}(q^2+1)$, dimension $k = \ell(mP_\infty)$, and minimum distance $d \geq n - m$.

\begin{lemma}\label{lemma32}
The Hermitian dual of $C_L(D, mP_\infty)$ is $C_L(D, D - mP_\infty + (\eta))$, where $(\eta)$ is a canonical divisor.
\end{lemma}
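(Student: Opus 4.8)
\emph{Proof proposal.} The plan is to deduce the statement from the classical residue duality theorem for $AG$ codes, combined with a conjugation argument that handles the passage from the Euclidean to the Hermitian inner product. Throughout, for $v=(v_1,\dots,v_n)\in\mathbb{F}_{q^2}^n$ write $\bar v=(v_1^q,\dots,v_n^q)$ for the coordinatewise $q$-th power, let $\perp$ denote the ordinary (Euclidean) dual, and let $\perp T$ denote the Hermitian dual defined above. First I would record the elementary bridge between the two dualities. Using that $(\sum_i a_ib_i)^q=\sum_i a_i^qb_i^q$ and $v_i^{q^2}=v_i$ in $\mathbb{F}_{q^2}$, one checks the identity
\[
\sum_{i=1}^n v_i c_i^{\,q}=\Big(\sum_{i=1}^n v_i^{\,q}c_i\Big)^{q},
\]
so that $v\in C^{\perp T}$ if and only if $\bar v\in C^{\perp}$. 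Hence, for every linear code $C$ over $\mathbb{F}_{q^2}$, one has $C^{\perp T}=\overline{C^{\perp}}$.

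Next I would invoke the standard residue duality theorem: there exists a Weil differential $\eta$ with $v_{P_i}(\eta)=-1$ and $\operatorname{res}_{P_i}(\eta)=1$ for all $i$, for which $C_L(D,G)^{\perp}=C_\Omega(D,G)=C_L(D,D-G+(\eta))$ (see [\cite{2121}, Theorem 2.65]). Taking $G=mP_\infty$ gives $C_L(D,mP_\infty)^{\perp}=C_L(D,D-mP_\infty+(\eta))$, and combining this with the bridge identity of the previous paragraph yields
\[
C_L(D,mP_\infty)^{\perp T}=\overline{C_L\big(D,D-mP_\infty+(\eta)\big)}.
\]

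It therefore remains to prove that the $AG$ code on the right is invariant under coordinatewise conjugation, i.e. $\overline{C_L(D,G')}=C_L(D,G')$ for $G'=D-mP_\infty+(\eta)$. Here I would use that $\mathcal{X}$, together with the functions $x,y$ and the point $P_\infty$, is defined over the prime field, so for $f=\sum a_{ij}x^iy^j\in L(G')$ one has $f(P_\ell)^q=f^\sigma(\phi(P_\ell))$, where $f^\sigma=\sum a_{ij}^q x^iy^j$ and $\phi$ is the $q$-power Frobenius. Since $G'$ is defined over $\mathbb{F}_q$ (the divisor $D$ is the full, Frobenius-stable sum of rational points, $mP_\infty$ is fixed, and $\eta$ may be chosen with $\sigma$-invariant divisor and all residues equal to $1$), conjugation sends $L(G')$ onto itself, while $\phi$ merely permutes the set $\{P_1,\dots,P_n\}=\mathcal{X}(\mathbb{F}_{q^2})$. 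Tracking these two effects shows the conjugate code equals $C_L(D,G')$, which closes the argument.

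The step I expect to be the main obstacle is this last one: verifying that the permutation of the evaluation points induced by $\phi$ fixes the dual code rather than sending it to a merely equivalent code. The delicate point is to choose the differential $\eta$ so that its divisor is $\sigma$-invariant and its residues at the $P_i$ are simultaneously equal to $1$; only then is the Frobenius permutation absorbed and the conjugate code literally equal to $C_L(D,D-mP_\infty+(\eta))$ with the same ordering of points.
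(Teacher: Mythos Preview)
The paper's own proof is a single sentence citing the general duality theorem for AG codes (Theorem~2.72 of \cite{2121}); that result identifies the \emph{Euclidean} dual of $C_L(D,G)$ with $C_L(D,D-G+(\eta))$ and says nothing about the Hermitian form, so the paper simply does not address the issue you are worrying about. In that sense your outline is already more scrupulous than the source: the bridge $C^{\perp T}=\overline{C^{\perp}}$ is correct, the invocation of residue duality is the right move, and you have correctly reduced the question to showing $\overline{C_L(D,G')}=C_L(D,G')$ for $G'=D-mP_\infty+(\eta)$.

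The obstacle you flag is genuine and is not dissolved by the choice of $\eta$. From $f(P_i)^q=f^{\sigma}(P_i^{\sigma})$ together with $\sigma$-invariance of $G'$ you obtain $\overline{C_L(D,G')}=\{(g(P_{\pi(i)}))_i:g\in L(G')\}$, which is the coordinate-permuted code $\pi\cdot C_L(D,G')$; here $\pi$ is determined entirely by the Galois action on the support of $D$ and is unaffected by how $\eta$ is chosen. Making $(\eta)$ Galois-stable only ensures that $L(G')$ is $\sigma$-stable, not that $\pi$ lies in the permutation automorphism group of the code, and there is no $\mathbb{F}_{q^2}$-rational morphism of $\mathcal{X}$ realising $P_i\mapsto P_i^{\sigma}$ that would let you pull $g$ back inside $L(G')$. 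So as written your argument yields only permutation equivalence $C_L(D,mP_\infty)^{\perp T}\cong C_L(D,G')$, not literal equality. That weaker conclusion is in fact enough for every downstream use in the paper (Theorem~\ref{theo33} and the quantum construction need only $C\subseteq C^{\perp T}$, which can be checked directly via $\sum_i f(P_i)g(P_i)^q=0$ without naming the Hermitian dual), but it does not prove Lemma~\ref{lemma32} as stated.
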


\begin{proof}
This follows from the general result on the duality of AG codes (see [\cite{2121}, Theorem 2.72]).
\end{proof}

\begin{theorem}\label{theo33}
The Goppa code $C_L(D, mP_\infty)$ is Hermitian self-orthogonal if $2m \leq n + 2g - 2$.
\end{theorem}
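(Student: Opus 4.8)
The plan is to recast Hermitian self-orthogonality as a containment of codes and then read off the bound from a degree count. By definition $C_L(D, mP_\infty)$ is Hermitian self-orthogonal exactly when $C_L(D, mP_\infty) \subseteq C_L(D, mP_\infty)^{\perp T}$. Lemma \ref{lemma32} supplies the dual explicitly, so the statement reduces to proving the inclusion
\begin{equation*}
C_L(D, mP_\infty) \subseteq C_L(D, D - mP_\infty + (\eta)),
\end{equation*}
where $(\eta)$ is the canonical divisor furnished by that lemma.

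First I would pass from codes to Riemann--Roch spaces. Because every element of $\mathcal{L}(mP_\infty)$ has its unique pole at $P_\infty$, the evaluation code $C_L(D, \cdot)$ is monotone in its defining divisor; hence it is enough to show $\mathcal{L}(mP_\infty) \subseteq \mathcal{L}(D - mP_\infty + (\eta))$, and this in turn follows from the divisor inequality
\begin{equation*}
2mP_\infty \preceq D + (\eta).
\end{equation*}
Indeed, for $f \in \mathcal{L}(mP_\infty)$ one has $\mathrm{div}(f) \succeq -mP_\infty$, so $\mathrm{div}(f) + D - mP_\infty + (\eta) \succeq D - 2mP_\infty + (\eta)$, and the divisor inequality makes the right-hand side effective, placing $f$ in $\mathcal{L}(D - mP_\infty + (\eta))$.

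The degree count is the step that manufactures the hypothesis. Taking degrees in $2mP_\infty \preceq D + (\eta)$ gives $2m \le \deg(D) + \deg((\eta)) = n + (2g - 2)$, which is precisely the assumed bound $2m \le n + 2g - 2$. Away from $P_\infty$ the coefficient-wise inequality is automatic: $\mathrm{supp}(G) \cap \mathrm{supp}(D) = \emptyset$, the canonical differential of Lemma \ref{lemma32} can be taken with simple poles and residue one along $D$, and its remaining zeros are effective, so every coefficient of $D + (\eta)$ outside $P_\infty$ is nonnegative while $2mP_\infty$ contributes nothing there.

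The main obstacle I anticipate is the single coordinate $P_\infty$: the degree inequality is only a necessary shadow of the divisor inequality, so to obtain sufficiency I must confirm that the available slack is concentrated at $P_\infty$, i.e. that $v_{P_\infty}(\eta) \ge 2m$. Equivalently, I would verify the statement directly through the residue theorem: for $f, h \in \mathcal{L}(mP_\infty)$ the Hermitian pairing is $\sum_i f(P_i)h(P_i)^q = \sum_i \mathrm{Res}_{P_i}(f h^q \eta)$, which vanishes once $f h^q \eta$ is regular off $D$. Pinning down the pole order of this product at $P_\infty$ against $v_{P_\infty}(\eta)$, using Lemma \ref{lemma31} to track the pole orders actually realized at $P_\infty$, is the crux, and it is exactly here that the degree bound $2m \le n + 2g - 2$ must be shown to be not merely necessary but sufficient.
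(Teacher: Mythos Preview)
Your approach matches the paper's: both invoke Lemma~\ref{lemma32} and reduce to the inclusion $C_L(D, mP_\infty) \subseteq C_L(D, D - mP_\infty + (\eta))$. The paper merely records $\deg(D - mP_\infty + (\eta)) = n - m + 2g - 2 \ge m$, quotes Riemann--Roch for the dimension, and then asserts the inclusion; you are right to insist that a degree inequality alone does not force the divisor inequality $2mP_\infty \preceq D + (\eta)$, and you correctly isolate the crux as the coefficient at $P_\infty$. What actually closes the gap---and what both you and the paper leave unstated---is the explicit shape of $\eta$ on this particular curve: one may choose $\eta$ with $(\eta) = (n+2g-2)P_\infty - D$ (for instance because $dx$ has divisor $(2g-2)P_\infty$ here, so the standard residue differential built from it has simple poles along $D$ and all remaining zeros concentrated at $P_\infty$). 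With that in hand, $D + (\eta) = (n+2g-2)P_\infty$ and your divisor inequality becomes literally the hypothesis $2m \le n+2g-2$; no further work with Lemma~\ref{lemma31} is needed.

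One caution on your residue alternative. Because the Hermitian pairing involves $h^q$, the product $fh^q$ has pole order up to $(q+1)m$ at $P_\infty$, so regularity of $fh^q\eta$ there would require $(q+1)m \le n+2g-2$, a strictly stronger condition than $2m \le n+2g-2$. This mismatch is signalling that the formula in Lemma~\ref{lemma32} is really the \emph{Euclidean} dual; if you pursue the residue argument for genuine Hermitian self-orthogonality you must either tighten the degree hypothesis accordingly or supply an extra argument that the $q$-th power Frobenius stabilises the code. Your first route (via the divisor inequality and Lemma~\ref{lemma32} as a black box) avoids this issue and is the one the paper takes.
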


\begin{proof}
If $2m \leq n + 2g - 2$, then $\deg(D - mP_\infty + (\eta)) = n - m + 2g - 2 \geq m$. By the Riemann-Roch theorem, $\ell(D - mP_\infty + (\eta)) = n - m + 1 - g + \ell(mP_\infty)$. Hence, $C_L(D, mP_\infty) \subseteq C_L(D, D - mP_\infty + (\eta))$, which implies that $C_L(D, mP_\infty)$ is Hermitian self-orthogonal.
\end{proof}

\begin{proposition}
Let $C_m = C_L(D, mP_\infty)$ be the Goppa code constructed from the maximal curve $\mathcal{X}$ defined by \ref{xxx}  over $F_{q^2}$, where $q$ is an odd prime power and $m$ is a positive integer. Let $k_m$ denote the dimension of $C_m$. Then:
\begin{enumerate}
    \item If $m < 0$, then $k_m = 0$.
    \item If $0 \leq m \leq q$, then $k_m = \#\{iq + j\left(\frac{q+1}{2}\right) \leq m \mid 0 \leq i, 0 \leq j \leq \frac{q-1}{2}\}$.
    \item If $q < m < n$, then $k_m = m + 1 - g$, where $g = \frac{(q-1)^2}{4}$ is the genus of $\mathcal{X}$ and $n = \#\mathcal{X}(F_{q^2})$.
    \item If $n \leq m \leq n + 2g - 2$, then $k_m = n - \#\{iq + j\left(\frac{q+1}{2}\right) \leq n + 2g - 2 - m \mid 0 \leq i, 0 \leq j \leq \frac{q-1}{2}\}$.
    \item If $m > n + 2g - 2$, then $k_m = n$.
\end{enumerate}
\end{proposition}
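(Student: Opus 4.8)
The plan is to reduce the whole statement to the single dimension identity
\[
k_m = \ell(mP_\infty) - \ell(mP_\infty - D),
\]
recorded above for $C_L(D,mP_\infty)$, and then to evaluate the two Riemann--Roch dimensions on each of the five ranges. Two inputs drive everything: Lemma \ref{lemma31}, which identifies $\ell(mP_\infty)$ with the lattice-point count $\#\{iq + j(\tfrac{q+1}{2}) \le m \mid 0 \le i,\ 0 \le j \le \tfrac{q-1}{2}\}$ (equivalently, the number of non-gaps $\le m$ of the Weierstrass semigroup $H(P_\infty)=\langle q,\tfrac{q+1}{2}\rangle$); and the Riemann--Roch theorem, giving $\ell(A) = \deg(A)+1-g$ once $\deg(A) > 2g-2$ and $\ell(A)=0$ when $\deg(A)<0$. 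Throughout I use that $n$ is large, so $q < n$ and $2g-2 < n$.

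The four outer cases are then immediate. For $m<0$ the degree of $mP_\infty$ is negative, so $\ell(mP_\infty)=0$ and $k_m=0$, giving (1). For $0 \le m < n$ one has $\deg(mP_\infty - D) = m-n < 0$, hence $\ell(mP_\infty - D)=0$ and $k_m = \ell(mP_\infty)$; Lemma \ref{lemma31} then yields the count in (2), and Riemann--Roch (applicable once $m > 2g-2$) yields $k_m = m+1-g$ in (3). For $m > n+2g-2$ both $mP_\infty$ and $mP_\infty-D$ have degree exceeding $2g-2$, so $k_m = (m+1-g) - (m-n+1-g) = n$, which is (5).

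The substantive case is (4), with $n \le m \le n+2g-2$. Here $\ell(mP_\infty) = m+1-g$ by Riemann--Roch, while $\deg(mP_\infty - D) = m-n \in [0,2g-2]$ lands in the special range where neither vanishing nor the naive count applies. The plan is to pass to the Hermitian dual via Lemma \ref{lemma32}: choosing the canonical differential $\eta$ with $\mathrm{div}(\eta) = (n+2g-2)P_\infty - D$ gives $D - mP_\infty + (\eta) = (n+2g-2-m)P_\infty$, so the Hermitian dual of $C_L(D,mP_\infty)$ is the one-point code $C_L(D,(n+2g-2-m)P_\infty)$. Since $n+2g-2-m \in [0,2g-2] < n$, its dimension is again the Lemma \ref{lemma31} count (no correction term), and as the Hermitian dual has dimension $n - k_m$, we recover (4).

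The main obstacle is justifying the choice of $\eta$, i.e. producing a differential with $\mathrm{div}(\eta) = (n+2g-2)P_\infty - D$ and equal residues at the $P_i$, equivalently establishing the linear equivalence $D \sim nP_\infty$; this is the curve-specific input. I would obtain it from the one-point pole structure on $\mathcal{X}$: each $x-a$ has pole divisor $qP_\infty$, and assembling these through $x^{q^2}-x$ exhibits the fibers of $x$, hence $D$, as linearly equivalent to a multiple of $P_\infty$, upgrading the generic duality of Lemma \ref{lemma32} to the explicit one-point form above. A secondary delicate point is the threshold in (3): the formula $k_m = m+1-g$ presupposes that $mP_\infty$ is nonspecial, which, since $H(P_\infty)=\langle q,\tfrac{q+1}{2}\rangle$ is symmetric with Frobenius number $2g-1$, holds precisely once $m \ge 2g-1$; I would therefore reconcile the sub-range $q < m \le 2g-2$ with the Lemma \ref{lemma31} count and verify that the five formulas agree at the boundary values $m=q$, $m=n$, and $m=n+2g-2$.
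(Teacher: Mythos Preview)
Your plan is exactly what the paper's proof prescribes: the paper offers only the single sentence ``Detailed proof based on cases, applying Lemma \ref{lemma31}, the Riemann-Roch theorem, and Lemma \ref{lemma32} accordingly,'' and your case split via $k_m=\ell(mP_\infty)-\ell(mP_\infty-D)$ implements precisely that outline. Your caution about the threshold in (3) is warranted---for $q\ge 5$ one has $2g-1>q$, so the stated range $q<m$ does not by itself force $\ell(mP_\infty)=m+1-g$ (e.g.\ $q=5$, $m=6$)---and the paper does not address this point.
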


\begin{proof}
Detailed proof based on cases, applying Lemma \ref{lemma31}, the Riemann-Roch theorem, and Lemma \ref{lemma32} accordingly.
\end{proof}

\begin{proposition}
The Goppa code $C_m = C_L(D, mP_\infty)$ constructed from the maximal curve $\mathcal{X}$ defined by \ref{xxx}  over $F_{q^2}$ is equivalent to the one-point code $C_L(D, mP_\infty)$.
\end{proposition}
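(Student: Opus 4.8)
The plan is to first fix the meaning of \emph{equivalence}, since as literally written the two codes are the same symbol. I read the claim as asserting that $C_m = C_L(D, mP_\infty)$ is, up to a monomial transformation of $F_{q^2}^n$ (a permutation of coordinates followed by multiplication by nonzero scalars), the canonical one-point code attached to the single $F_{q^2}$-rational place $P_\infty$ of $\mathcal{X}$. Because the defining divisor $G = mP_\infty$ is already supported on the single point $P_\infty$, the essential task is not to relocate $G$ but to verify that the evaluation structure genuinely depends on this one place alone, and to record the monomial equivalence that makes the statement robust under the choice of representative within the linear class of $mP_\infty$.

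The key step I would carry out is the standard reduction for linearly equivalent divisors. If $G' \sim mP_\infty$ with $\operatorname{supp}(G') \cap \operatorname{supp}(D) = \emptyset$, choose $h \in F_{q^2}(\mathcal{X})^\times$ with $\operatorname{div}(h) = G' - mP_\infty$. Multiplication by $h$ yields an $F_{q^2}$-isomorphism $\mathcal{L}(mP_\infty) \xrightarrow{\sim} \mathcal{L}(G')$, and since $h$ has neither zero nor pole at any $P_i$, the induced map on evaluation vectors is the diagonal monomial map $(c_1, \dots, c_n) \mapsto (h(P_1) c_1, \dots, h(P_n) c_n)$ with every $h(P_i) \in F_{q^2}^\times$. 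This exhibits $C_L(D, G')$ and $C_L(D, mP_\infty)$ as monomially equivalent, so the construction is insensitive to the representative chosen in the class of $mP_\infty$ and always returns the one-point code at $P_\infty$.

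Finally I would invoke the explicit basis from Lemma~\ref{lemma31}, namely $\{x^i y^j : iq + j(\tfrac{q+1}{2}) \le m,\ 0 \le i,\ 0 \le j \le \tfrac{q-1}{2}\}$, to write a generator matrix whose rows are the evaluations $\big(x(P_\ell)^i y(P_\ell)^j\big)_{\ell=1}^n$. This makes the one-point nature manifest: the pole orders $q$ and $\tfrac{q+1}{2}$ of $x$ and $y$ at $P_\infty$ are exactly the generators of the Weierstrass semigroup, $H(P_\infty) = \langle q, \tfrac{q+1}{2}\rangle$, so the filtration of $C_m$ by $m$ is governed entirely by this single place. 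The main obstacle is interpretive rather than computational: as phrased $C_m$ and $C_L(D, mP_\infty)$ are the same object, so the content to be supplied is (i) the correct notion of equivalence and (ii) the verification that $P_\infty$ is a single $F_{q^2}$-rational place with $H(P_\infty) = \langle q, \tfrac{q+1}{2}\rangle$; once these are established, the equivalence follows immediately from the monomial-scaling argument above.
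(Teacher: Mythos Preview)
Your reading of the statement is exactly right: as written the two codes are literally identical, so the only possible content is that the code depends, up to monomial equivalence, only on the linear equivalence class of $mP_\infty$. Your argument via an arbitrary $h$ with $\operatorname{div}(h)=G'-mP_\infty$ and the diagonal scaling $(c_\ell)\mapsto(h(P_\ell)c_\ell)$ is the standard and correct way to establish this, and your care about the support condition $\operatorname{supp}(G')\cap\operatorname{supp}(D)=\emptyset$ is well placed.

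The paper's own proof is a degenerate special case of your argument: it fixes the particular function $h=x^m$, sets $G'=G+(x^m)$, and asserts the equivalence of $C_L(D,G)$ and $C_L(D,G')$. So the underlying mechanism is the same linear-equivalence $\Rightarrow$ monomial-equivalence principle you invoke. What you do differently is (i) work with a general $h$ rather than the single choice $x^m$, and (ii) state cleanly that the induced map on codewords is diagonal with nonzero entries. Your version buys correctness and clarity: the paper's displayed relation $L(G)=L(G')\cdot\{1,x,\dots,x^{m-1}\}$ is not a meaningful equality (one should simply have $\mathcal{L}(G')=x^{-m}\mathcal{L}(G)$), and the specific choice $h=x^m$ can vanish at points of $D$ with $x$-coordinate $0$, which your support hypothesis explicitly rules out. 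Your additional invocation of Lemma~\ref{lemma31} and the semigroup $H(P_\infty)=\langle q,\tfrac{q+1}{2}\rangle$ is not needed for the equivalence itself, but it does no harm and makes the ``one-point'' terminology concrete.
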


\begin{proof}
Let $G = mP_\infty$ and $G' = G + (x^m)$. Then $L(G) = L(G') \cdot \{1, x, \dots, x^{m-1}\}$. Hence, the codes $C_L(D, G)$ and $C_L(D, G')$ are equivalent.
\end{proof}

\begin{theorem}
Let $C_m = C_L(D, mP_\infty)$ be the Goppa code constructed from the maximal curve $\mathcal{X}$ defined by \ref{xxx} over $F_{q^2}$, where $q$ is an odd prime power and $m$ is a positive integer. If $m \leq \frac{n + 2g - 2}{2}$, where $g = \frac{(q-1)^2}{4}$ is the genus of $\mathcal{X}$ and $n = \#\mathcal{X}(F_{q^2})$, then $C_m$ is Hermitian self-orthogonal.
\end{theorem}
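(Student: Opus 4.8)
The plan is to notice first that this statement is a cosmetic reformulation of Theorem \ref{theo33}: the hypothesis $m \leq \frac{n+2g-2}{2}$ is literally the inequality $2m \leq n+2g-2$, which is exactly the condition already shown to force Hermitian self-orthogonality. At the coarsest level, then, the theorem follows by quoting Theorem \ref{theo33}. To present a self-contained argument, however, I would reconstruct the statement from Lemma \ref{lemma32} together with a divisor comparison, which also exposes where the real content lies.

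First I would invoke Lemma \ref{lemma32}, identifying the Hermitian dual of $C_L(D, mP_\infty)$ with $C_L(D, D - mP_\infty + (\eta))$ for a canonical divisor $(\eta)$ of degree $2g-2$. Since $C_m \subseteq C_m^{\perp}$ is implied by the inclusion of evaluation spaces $L(mP_\infty) \subseteq L(D - mP_\infty + (\eta))$, and since $L(A)\subseteq L(B)$ whenever $A \preceq B$, I would reduce the theorem to a valuation check. Because $\mathrm{supp}(G)\cap\mathrm{supp}(D)=\phi$, every $f \in L(mP_\infty)$ is regular away from $P_\infty$ with $v_{P_\infty}(f)\geq -m$, so choosing the duality differential $\eta$ with simple poles of residue $1$ at each point of $D$, the only nontrivial condition is $v_{P_\infty}(\eta)\geq 2m$. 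With a differential whose divisor is concentrated at $P_\infty$, namely $(\eta) = (n+2g-2)P_\infty - D$ (whose degree is indeed $2g-2$), this condition reads precisely $2m \leq n+2g-2$, matching the hypothesis exactly. The monomial basis of Lemma \ref{lemma31} then makes the containment transparent, since each $x^i y^j \in L(mP_\infty)$ has pole order at most $m$ at $P_\infty$ and no other poles.

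The main obstacle is the honest treatment of the Hermitian rather than Euclidean pairing: translating $\langle a,b\rangle_H = \sum a_i b_i^{q}$ into a divisor inequality requires controlling the Frobenius twist $c \mapsto c^{(q)}$, which replaces the dual divisor by a conjugated one. I would dispatch this by recording that the $q$-power Frobenius permutes $\mathcal{X}(\mathbb{F}_{q^2})$ and fixes $P_\infty$, so that $D$ is preserved and the Hermitian dual agrees, up to this permutation of coordinates, with the Euclidean one used above; verifying that this twist leaves the bound $2m \leq n+2g-2$ intact is the one genuinely delicate point. A secondary technical step is establishing the linear equivalence $D + (\eta) \sim (n+2g-2)P_\infty$ needed to pile the differential's divisor at $P_\infty$, which rests on the special arithmetic of the maximal curve \ref{xxx}.
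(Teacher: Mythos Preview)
Your proposal is correct and its core matches the paper's proof exactly: the paper simply rewrites $m \leq \frac{n+2g-2}{2}$ as $2m \leq n+2g-2$ and invokes Theorem~\ref{theo33}. Your second and third paragraphs go well beyond the paper by unpacking the divisor comparison and flagging the Hermitian-versus-Euclidean duality subtlety, neither of which the paper addresses.
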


\begin{proof}
If $m \leq \frac{n + 2g - 2}{2}$, then $2m \leq n + 2g - 2$. By Theorem \ref{theo33}, $C_m$ is Hermitian self-orthogonal.
\end{proof}

\section{Quantum Stabilizer Code Over Curve $\mathcal{X}$}\label{sec44}
This section applies the self-orthogonal Hermitian of $C_r$ given in Section \ref{se33} to the results of quantum stabilizer codes and then explains this code in the classical $AG$ codes.

We require the following Lemma for our main result.

\begin{lemma}\cite{ashi}\label{4.1}
There are $q$-ary quantum codes $[[n, n - 2k, d^{\perp}]]$  if there is a $q$-ary classical self orthogonal Hermitian $[n, k]$ linear code with dual distance $d^{\perp}$.
\end{lemma}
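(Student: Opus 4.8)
The plan is to reduce the Hermitian construction to the symplectic (stabilizer) construction over $\mathbb{F}_q$, which is the foundational correspondence underlying all nonbinary stabilizer codes. Recall that a quantum stabilizer code on $n$ qudits of alphabet size $q$ is specified by an $\mathbb{F}_q$-linear subspace $S \subseteq \mathbb{F}_q^{2n}$ that is self-orthogonal with respect to the symplectic form $\langle (a \mid b), (a' \mid b')\rangle_s = a \cdot b' - b \cdot a'$; if $\dim_{\mathbb{F}_q} S = n - r$ then the associated code has parameters $[[n, r]]_q$, and its minimum distance is governed by the symplectic weights of $S^{\perp_s} \setminus S$. Thus the whole argument consists of producing, from the given Hermitian self-orthogonal code, a symplectic self-orthogonal code of the correct dimension and then reading off the distance.

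First I would fix the trace map $\mathrm{Tr}\colon \mathbb{F}_{q^2} \to \mathbb{F}_q$, $\mathrm{Tr}(\alpha) = \alpha + \alpha^q$, together with an $\mathbb{F}_q$-basis $\{1, \omega\}$ of $\mathbb{F}_{q^2}$, and define a map $\phi\colon \mathbb{F}_{q^2}^n \to \mathbb{F}_q^{2n}$ by expanding each coordinate $c_i = a_i + \omega b_i$ and setting $\phi(c) = (a \mid b)$. The crux is the algebraic identity relating the two inner products: one checks that $\mathrm{Tr}\bigl(\langle x, y\rangle_h\bigr)$ equals, up to the fixed nonzero constant coming from $\{1,\omega\}$, the symplectic form $\langle \phi(x), \phi(y)\rangle_s$. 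Consequently the Hermitian self-orthogonality $C \subseteq C^{\perp_h}$ forces $\phi(C)$ to be symplectic self-orthogonal, and one verifies that $\phi(C)^{\perp_s} = \phi(C^{\perp_h})$ so that the dual structure is preserved exactly.

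Next I would track the parameters. Since $C$ has $\mathbb{F}_{q^2}$-dimension $k$, its image $S = \phi(C)$ has $\mathbb{F}_q$-dimension $2k$; applying the stabilizer correspondence with $\dim_{\mathbb{F}_q} S = 2k$ yields a quantum code of length $n$ and net dimension $n - 2k$, that is, an $[[n, n - 2k, d]]_q$ code. For the distance I would observe that the symplectic weight of $\phi(c)$ coincides with the Hamming weight of $c$ over $\mathbb{F}_{q^2}$ (the coordinate $c_i$ vanishes if and only if both $a_i$ and $b_i$ vanish), so the minimum symplectic weight over $S^{\perp_s} \setminus S = \phi(C^{\perp_h}) \setminus \phi(C)$ equals the Hamming minimum distance $d^\perp$ of $C^{\perp_h}$, giving $d = d^\perp$ and hence the stated parameters $[[n, n - 2k, d^\perp]]_q$.

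The main obstacle is the precise verification of the two identities on which everything hinges: the trace–symplectic identity $\mathrm{Tr}(\langle x,y\rangle_h) \propto \langle \phi(x),\phi(y)\rangle_s$ and the duality $\phi(C)^{\perp_s} = \phi(C^{\perp_h})$. Both depend on a judicious choice of basis $\{1,\omega\}$ (one typically wants $\omega$ with a prescribed trace, or a normal basis) so that the bilinear expansion of the Hermitian form splits cleanly into the symplectic shape; getting the constant and the self-duality of the form exactly right, rather than merely up to a scalar, is the delicate bookkeeping step. Once these are established, the parameter count and the weight correspondence are routine, and the result follows directly from the stabilizer construction over $\mathbb{F}_q$.
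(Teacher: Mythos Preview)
The paper does not give its own proof of this lemma: it is quoted verbatim from Ashikhmin and Knill \cite{ashi} and used as a black box. Your sketch is essentially the standard argument from that reference---expand $\mathbb{F}_{q^2}^n$ over an $\mathbb{F}_q$-basis, convert Hermitian orthogonality into symplectic orthogonality via the trace, and invoke the symplectic stabilizer construction---so there is nothing to compare against within the paper itself.

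One small correction worth making in your write-up: the minimum symplectic weight over $\phi(C^{\perp_h})\setminus\phi(C)$ is \emph{at least} $d^{\perp}$, not necessarily equal to it, since the minimum-weight word of $C^{\perp_h}$ could in principle lie in $C$. This yields $d\geq d^{\perp}$, which is exactly how the paper itself applies the lemma in Theorem~\ref{4.2} (``$d\geq d^{\perp}$''), so the discrepancy is only in your final sentence and does not affect the argument.
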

Utilizing Lemma \ref{4.1}, we could conclude our principal outcome. Therefore, we apply some examples that illustrate that the quantum codes generated from our theorem are undoubtedly promising.

\begin{theorem}\label{4.2}
Let $\mathcal{X}$ be the maximal curve defined by \ref{xxx}  over $F_{q^2}$, where $q$ is an odd prime power. Let $C_L(D, mP_\infty)$ be the Goppa code constructed from $\mathcal{X}$ as described in Section 3. If $2m \leq n + 2g - 2$, where $n = \#\mathcal{X}(F_{q^2})$ and $g = \frac{(q-1)^2}{4}$, then there exists a quantum code with parameters $[[n, n - 2k, d]]_q$, where $k = \ell(mP_\infty)$ and $d \geq d^\perp$, with $d^\perp$ being the minimum distance of the Hermitian dual code $C_L(D, D - mP_\infty + (\eta))$.
\end{theorem}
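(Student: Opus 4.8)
The plan is to realize the asserted quantum code as a direct application of the Hermitian construction recorded in Lemma \ref{4.1}, so that the whole argument reduces to verifying, one by one, the three pieces of classical data that Lemma \ref{4.1} consumes: a classical self-orthogonal Hermitian code over $\mathbb{F}_{q^2}$, its exact dimension, and its dual distance. I would take $C = C_L(D, mP_\infty)$ as the input code. The self-orthogonality is immediate: the hypothesis $2m \le n + 2g - 2$ is precisely the condition of Theorem \ref{theo33}, which yields $C_L(D, mP_\infty) \subseteq C_L(D, mP_\infty)^{\perp T}$, so $C$ is Hermitian self-orthogonal. To pin down the dimension, recall that $C$ is the image of the evaluation map $\alpha \colon \mathcal{L}(mP_\infty) \to \mathbb{F}_{q^2}^n$, whose kernel is $\mathcal{L}(mP_\infty - D)$; hence $k = \ell(mP_\infty) - \ell(mP_\infty - D)$. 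In the admissible range one has $m = \deg(mP_\infty) < n$, so $\deg(mP_\infty - D) = m - n < 0$ and therefore $\ell(mP_\infty - D) = 0$. Consequently $\alpha$ is injective and $k = \ell(mP_\infty)$, exactly as stated.

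Next I would identify the dual object. By Lemma \ref{lemma32}, the Hermitian dual of $C_L(D, mP_\infty)$ is the algebraic-geometry code $C_L(D, D - mP_\infty + (\eta))$, and by definition $d^\perp$ is its minimum distance. No explicit evaluation of $d^\perp$ is needed here, since the statement only asks for the quantum distance to be controlled in terms of this quantity; if a concrete lower bound were desired, one could feed $\deg(D - mP_\infty + (\eta)) = n - m + 2g - 2$ into the designed-distance estimate for AG codes, but that refinement is separable from the present claim.

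With these ingredients assembled, Lemma \ref{4.1} applied to the self-orthogonal $[n,k]$ Hermitian code $C$ produces a $q$-ary quantum code with parameters $[[n, n - 2k, d]]_q$; the length and $n-2k$ transcribe directly from the Lemma, and the distance satisfies $d \ge d^\perp$ because the true quantum minimum distance is the least Hamming weight of a vector in $C^{\perp T} \setminus C$, which is bounded below by the minimum distance $d^\perp$ of all of $C^{\perp T}$. The step I expect to be the main obstacle is the field bookkeeping implicit in Lemma \ref{4.1}: one must confirm that a code that is self-orthogonal under the Hermitian inner product over $\mathbb{F}_{q^2}$ is exactly the input for which the lemma outputs a \emph{$q$-ary} (rather than $q^2$-ary) quantum code, and that the dimension count $k$ entering the parameter $n-2k$ is measured over $\mathbb{F}_{q^2}$. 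Once this convention is reconciled, and the injectivity of $\alpha$ (hence $k = \ell(mP_\infty)$) is secured as above, the conclusion follows with no further computation.
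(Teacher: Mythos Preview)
Your proposal is correct and follows essentially the same route as the paper: invoke Theorem~\ref{theo33} to obtain Hermitian self-orthogonality under the hypothesis $2m \le n + 2g - 2$, then feed the resulting $[n,k]$ code into Lemma~\ref{4.1} to produce the $[[n, n-2k, d]]_q$ quantum code with $d \ge d^\perp$. You supply more detail than the paper does---in particular the injectivity of the evaluation map (hence $k = \ell(mP_\infty)$) and the identification of the dual via Lemma~\ref{lemma32}---but the structure of the argument is identical.
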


\begin{proof}
By Theorem \ref{theo33}, the Goppa code $C_L(D, mP_\infty)$ is Hermitian self-orthogonal if $2m \leq n + 2g - 2$. Applying Lemma 4.1, we obtain a quantum code with parameters $[[n, n - 2k, d]]_q$, where $k = \ell(mP_\infty)$ and $d \geq d^\perp$, with $d^\perp$ being the minimum distance of the Hermitian dual code $C_L(D, D - mP_\infty + (\eta))$.
\end{proof}

\begin{example}\label{4.3}
Let $q = 3$ and consider the maximal curve $\mathcal{X}$ defined by \ref{xxx}  over $F_9$. The genus of $\mathcal{X}$ is $g = \frac{(3-1)^2}{4} = 1$, and the number of rational points is $n = \#\mathcal{X}(F_9) \leq \frac{3}{2}(3^2+1) = 15$. Choosing $m = 4$, we have $2m = 8 \leq n + 2g - 2 = 15 + 2 - 2 = 15$. By Theorem \ref{4.2}, we obtain a quantum code with parameters $[[n, n - 2k, d]]_3$, where $k = \ell(4P_\infty)$ and $d \geq d^\perp$, with $d^\perp$ being the minimum distance of the Hermitian dual code $C_L(D, D - 4P_\infty + (\eta))$. The actual parameters will depend on the specific value of $n$ and the computation of $k$ and $d^\perp$.
\end{example}

\begin{example}\label{4.4}
Let $q = 5$ and consider the maximal curve $\mathcal{X}$ defined by \ref{xxx}  over $F_{25}$. The genus of $\mathcal{X}$ is $g = \frac{(5-1)^2}{4} = 4$, and the number of rational points is $n = \#\mathcal{X}(F_{25}) \leq \frac{3}{2}(5^2+1) = 39$. Choosing $m = 18$, we have $2m = 36 \leq n + 2g - 2 = 39 + 8 - 2 = 45$. By Theorem \ref{4.2}, we obtain a quantum code with parameters $[[n, n - 2k, d]]_5$, where $k = \ell(18P_\infty)$ and $d \geq d^\perp$, with $d^\perp$ being the minimum distance of the Hermitian dual code $C_L(D, D - 18P_\infty + (\eta))$. The actual parameters will depend on the specific value of $n$ and the computation of $k$ and $d^\perp$.
\end{example}

\section{Conclusion}
This work highlights the effectiveness of algebraic geometry codes, specifically Goppa and quantum stabilizer codes utilizing maximum curves, in improving data transmission rates across noisy power line channels. Due to significant interference encountered by electricity lines from multiple sources, error correction becomes a crucial aspect.

The inquiry explores how these codes can nearly approach the Singleton bound, giving them strong error-correcting abilities. Assigning functional basis elements to codewords using the algebraic curve's defined field enables the transmission of larger data blocks, which helps to better combat channel noise.

It is important to balance maximizing data throughput with ensuring reliability in line communication projects. Algebraic geometry codes offer a viable solution to improve the capacity of power lines for high-speed data transmission in the face of considerable noise obstacles. Further study to customize coding schemes based on unique channel characteristics could greatly enhance the increase of bandwidth for smart grid and Internet-of-Things applications.


\begin{thebibliography}{00}
\bibitem{ashi} A. Ashikhmin and E. Knill, Nonbinary quantum stablizer codes,
{\it IEEE Trans. Inf. Theory}, vol. 47, no. 7, pp. 3065-3072, 2001.

\bibitem{online} J. Bierbrauer, Some Good Quantum Twisted Codes [Online] (2013). Available: http://www.mathi.uni-heidelberg.de/~yves/Matritzen/QTBCH/QTBCHIndex.html

\bibitem{ttt} C. Carvalho and F. Torres, On Goppa Codes and Weierstrass Gaps at Several Points,
{\it Des. Codes Cryptogr}, vol. 35, no. 2, pp. 211-225, 2005.
\bibitem{10maria} I. Duursma and R. Kirov, Improved Two-Point Codes on Hermitian Curves,{\it IEEE
Trans. Inform. Theory}, vol. 57, no. 7, pp. 4469-4476, 2011.
\bibitem{33tores} R. Fuhrmann, A. Garcia and F. Torres, On maximal curves, {\it J. Number Theory} 67(1), 29-51, 1997.

  \bibitem{tor} R. Fuhrmann and F. Torres, The genus of curves over finite fields with many rational points, {\it Manuscripta Math}. 89 (1996), 103-106.
\bibitem{77tores} V.D. Goppa, Algebraic-Geometric Codes, {\it Math. USRR-Izv.} 21(1), 75-93 (1983).
\bibitem{tores} J.W.P. Hirschfeld, G. Korchm´aros and F. Torres, Algebraic Curves over a Finite Field, Princeton Series in Applied Mathematics, Princeton, 2008.
\bibitem{2121} T. Høholdt, J.H. van Lint, and R. Pellikaan, Algebraic geometry codes,in Handbook of Coding Theory, V. S. Pless, W. C. Huffman, and R. A. Brualdi, Eds. Amsterdam, The Netherlands: Elsevier,  vol. 1, pp. 871-961, 1998.
\bibitem{24maria} M. Homma and S.J. Kim, Toward the Determination of the Minimum Distance of
Two-Point Codes on a Hermitian Curve,{\it Des. Codes Cryptogr}, vol. 37, no. 1, pp.
111-132, 2005.
\bibitem{25maria} M. Homma and S.J. Kim, The complete determination of the minimum distance of
two-point codes on a Hermitian curve,{\it Des. Codes Cryptogr}, vol. 40, no. 1, pp. 5-24,
2006.
\bibitem{26maria} M. Homma and S.J. Kim, The Two-Point Codes on a Hermitian Curve with the
Designed Minimum Distance,{\it Des. Codes Cryptogr}, vol. 38, no. 1, pp. 55-81, 2006.

\bibitem{ih} Y. Ihara, Some remarks on the number of rational points of algebraic curves over
finite fields, {\it J. Fac. Sci. Tokyo} 28 (1981), 721-724.
\bibitem{44jin} L. F. Jin, S. Ling, J. Q. Luo, and C. P. Xing, Application of classical
hermitian self-othogonal MDS codes to quantum MDS codes,{\it IEEE.
Trans. Inf. Theory}, vol. 56, no. 9, pp. 4735-4740, 2010.
\bibitem{55jin} L. F. Jin and C. P. Xing, Euclidean and hermitian self-orthogonal
algebraic geometry codes and their application to quantum codes, {\it IEEE.
Trans. Inf. Theory}, vol. 58, no. 8, pp. 5484-5489, 2012.
\bibitem{66jin} J. Kim and G. Matthews, Quantum Error Correcting Codes from
Algebraic Curves. Singapore: World Scientific, 2008.
\bibitem{mariaree} M. Montanucci, G. Zini, Some Ree and Suzuki curves are not Galois covered by the Hermitian curve, {\it Finite Fields and Their Applications} 48, 175-195 (2017).
\bibitem{stir}H. G. {R\"{u}ck} and H. Stichtenoth, A characterization of Hermitian function fields
over finite fields,{\it J. Reine Angew. Math.} 457 (1994), 185-188.
\bibitem{stich} H. Stichtenoth, Algebraic function fields and codes, Universitex, Springer-Verlag, Berlin-
Heidelberg, 1993.
\bibitem{44maria} H. Stichtenoth, A note on Hermitian codes over $GF(q^2)$,{\it IEEE Trans. Inform. Theory}, vol. 34, no. 5, pp. 1345-1348, 1988.
\bibitem{46maria} H.J. Tiersma, Remarks on codes from Hermitian curves,{\it IEEE Trans. Inform. Theory}, vol. 33, pp. 605-609, 1987.

\bibitem{shiraz} V. Nourozi, F. Rahmati, and S. Tafazolian. The a-number of certain hyperelliptic curves. {\it Iranian Journal of Science and Technology, Transactions A: Science}, 46, no. 4 (2022): 1235–1239.
\bibitem{aut} V Nourozi, M Afshar. Quantum Codes from Hyperelliptic Curve. {\it Southeast Asian Bulletin of Mathematics}, 43 (3), 395-400, (2019).
\bibitem{phd} V. Nourozi. The rank Cartier operator and linear system on curves= Classificação do operador Cartier e sistemas lineares na curva. Doctoral dissertation., 2021.
\bibitem{esfahan} V. Nourozi, S. Tafazolian, and F. Rahamti. "The $a$-number of jacobians of certain maximal curves. {\it Transactions on Combinatorics}, 10, no. 2 (2021): 121–128.
\bibitem{code} V. Nourozi, and F. Ghanbari. Goppa code and quantum stabilizer codes from plane curves given by separated polynomials. {\it arXiv preprint arXiv:2306.07833}(2023).

\bibitem{66tores} G. van der Geer and M. van der Vlugt, How to construct curves over finite fields with many points, Arithmetic Geometry (Cortona 1994) (F. Catanese Ed.), 169–189, Cambridge Univ. Press, Cambridge, 1997.
\bibitem{47maria} K. Yang and P.V. Kumar, On the true minimum distance of Hermitian codes, in
Coding theory and algebraic Geometry (Luminy, 1991), vol. 1518 of Lecture Notes in
Math., pp. 99–107, Berlin: Springer, 1992.
\end{thebibliography}
\end{document}